\newtheorem{lemma}{Lemma}
\newcolumntype{L}[1]{>{\raggedright\let\newline\\\arraybackslash\hspace{0pt}}m{#1}}
\newcolumntype{C}[1]{>{\centering\let\newline\\\arraybackslash\hspace{0pt}}m{#1}}
\newcolumntype{R}[1]{>{\raggedleft\let\newline\\\arraybackslash\hspace{0pt}}m{#1}}
\theoremstyle{remark}
\def\specialpapernotice#1{\if@confmode%
	\def\@specialpapernotice{{\sublargesize\textit{#1}\vspace*{1em}}}%
	\else%
	\def\@specialpapernotice{{\\*[1.5ex]\sublargesize\textit{#1}}\vspace*{-2ex}}%
	\fi}
\newcommand{\T}{{\scriptscriptstyle\mathsf{T}}}
\renewcommand{\H}{{\scriptscriptstyle\mathsf{H}}}
\begin{document}
\setlength{\abovedisplayskip}{3.0pt}
\setlength{\belowdisplayskip}{3.0pt}
 
	\title{Beamforming Design for Max-Min Fairness Performance Balancing in ISAC Systems\vspace{-0.25cm}
   }

		\author{
			
		\IEEEauthorblockN{Tianyu Fang, Nhan Thanh Nguyen, and Markku Juntti}
		\IEEEauthorblockA{
		Centre for Wireless communications, University of Oulu, P.O.Box 4500, FI-90014, Finland\\
			Email: \{tianyu.fang, nhan.nguyen, markku.juntti\}@oulu.fi}\vspace{-0.75cm}		
			}

\maketitle

\thispagestyle{empty}
\pagestyle{empty}
\begin{abstract}
Integrated sensing and communications (ISAC) is envisioned as a key technology for future wireless communications. In this paper, we consider a downlink monostatic ISAC system wherein the base station serves multiple communications users and sensing targets at the same time in the presence of clutter. We aim at both guaranteeing fairness among the communications users while simultaneously balancing the performances of communications and sensing functionalities. Therefore, we optimize the transmit and receive beamformers to maximize the weighted minimum signal-to-interference and clutter-plus-noise ratios. The design problem is highly challenging due to the non-smooth and non-convex objective function and strongly coupled variables. We propose two efficient methods to solve the problem. First, we rely on fractional programming and transform the original problem into convex sub-problems, which can be solved with standard convex optimization tools. To further reduce the complexity and dependence on numerical tools, we develop a novel approach to address the inherent non-smoothness of the formulated problem. Finally, the efficiencies of the proposed designs are demonstrated by numerical results.
\end{abstract}
\begin{IEEEkeywords}
 Max min fairness, MIMO systems,  integrated sensing and communications, non-smooth optimization.
\end{IEEEkeywords}

\section{Introduction}


Wireless communications and radar sensing are currently covered under the framework of integrated sensing and communications (ISAC)  \cite{ma2020joint,zhang2021overview}. Recognized for its potential to enhance spectral efficiency and to reduce hardware costs, ISAC emerges as a promising technique for achieving spectrum sharing between communications and radar service on a unified platform. ISAC systems can be classified based on their primary focus into 1) radar-centric, 2) communications-centric, and 3) joint designs. In the radar-centric approach, existing radar technologies are enhanced to include communications capabilities, achieved through methods like integrating digital messages~\cite{huang2020majorcom} into radar waveforms or modulating radar side lobes~\cite{Hassanien2016Dual}. Conversely, communications-centric approaches utilize conventional communications signals for environmental probing~\cite{Kumari2018IEEE80211ad}. In joint ISAC designs, transceivers are optimized to balance communications and sensing capabilities \cite{liu2018mu, nguyen2023jointssp, nguyen2023multiuser, nguyen2023joint}.

The joint transceiver beamforming design for improving spectral efficiency constitutes a basic problem for ISAC systems~\cite{liu2018mu, liu2020joint, johnston2022mimo,choi2024joint,wang2024joint, zhang2023isac, wang2023optimizing,liu2021cramer, song2023intelligent, zhu2023cramer}. The achievable (sum) rate or signal-to-interference-plus-noise ratio (SINR) for each user are typical performance metrics for the communications functionality. For the sensing functionality, on the other hand, various metrics can be used depending on the sensing problem and setup. Those include the beampattern~\cite{liu2018mu, liu2020joint, johnston2022mimo}, signal-to-clutter-plus-noise ratio (SCNR)~\cite{choi2024joint,wang2024joint, zhang2023isac, wang2023optimizing}, Cramér--Rao lower bound of the target position estimate~\cite{liu2021cramer, song2023intelligent, zhu2023cramer}, and more~\cite{johnston2022mimo}. Specifically, in~\cite{liu2018mu, liu2020joint, johnston2022mimo}, transmit beamformers were designed to achieve desired sensing beampatterns under the SINR constraints. Furthermore, the SCNR alone \cite{choi2024joint, wang2024joint, zhang2023isac} or the weighted sum of SINR and SCNRs were employed \cite{wang2023optimizing} for beamforming design to guarantee fairness between communications users and sensing targets.
\par In this paper, we consider a multiuser multi-target monostatic ISAC system. To guarantee fairness among both communications users and simultaneously balance the resource usages with sensing, we aim at maximizing the weighted sum of the minimum SINR and the minimum SCNR, subject to per-antenna power constraint. The design problem is appealing due to its conceptual simplicity and the potential to consider both communications and sensing performances in the power domain without complicated mappings to higher-end performance metrics such as rate or target position estimation error. Regardless of its conceptual simplicity, the metric still results in a challenging optimization problem due to the non-smooth and non-convex objective function and strongly coupled variables. Therefore, we first rely on fractional programming (FP) to transform the original problem into convex sub-problems, which can be solved with convex optimization tools like CVX. However, one drawback of this approach is the high complexity and run time due to the dependence on CVX. We then propose an efficient and practical approach wherein the beamformer admits low-complexity closed-forms solutions. Finally, we present simulation results to evaluate the performance and run-time complexity of the two methods. It is observed that while both the proposed schemes can achieve fairness in ISAC systems, the latter method has a much lower run time and complexity. 

\par

\section{System Model and Problem Formulation}
\label{sec: system model}

\subsection{Signal Model}
\par We consider a downlink monostatic ISAC system, where the base station (BS) is equipped with $ L_{\mathrm{t}} $ transmit antennas and $ L_{\mathrm{r}} $ radar receive antennas. The BS simultaneously transmit probing signals to $K$ single-antenna communication users and $M$ point-like targets among $C$ clutter point-like targets. Let $ \mathcal K=\{1,\ldots,K\} $, $ \mathcal M=\{1,\ldots,M\} $, and $ \mathcal C=\{M+1,\ldots, M+C\} $ respectively denote the set of users, targets, and clutter targets. Let $ \mathbf s=[s_1,\ldots, s_K]^\T\in\mathbb C^{K\times 1} $ be the transmitted symbol vector and $ \mathbf W=[\mathbf w_1,\ldots,\mathbf w_K]\in\mathbb C^{L_{\mathrm{t}}\times K} $ be the corresponding precoding matrix at the BS. The transmitted signal at BS can be expressed as
$ \mathbf x= \mathbf W \mathbf s$, where it is assumed that $ \mathbf s\sim \mathcal{CN}(\mathbf 0,\mathbf I) $. 

\subsubsection{Communications Model}
The signal received by communications user $ k $ is given by
\begin{equation}\label{key}
	y_k=\mathbf h_k^\H\mathbf w_ks_k+\mathbf h^\H_k\sum_{j\neq k}^K\mathbf w_js_j +n_k, 
\end{equation}
where $ \mathbf h_k\in\mathbb C^{L_{\mathrm{t}}\times 1} $ is the channel vector between the BS and user $ k $, and $ n_k\sim\mathcal{CN}(0,\sigma_{\mathrm{c} k}^2)  $ is additive white Gaussian noise. Accordingly, the received signal-to-interference-plus-noise (SINR) of the intended symbol at user $ k $ is given as
\begin{equation}\label{key}
	\gamma_{\mathrm{c} k}=\frac{|\mathbf h_k^\H\mathbf w_k|^2}{\sum_{j\neq k}^K|\mathbf h^\H_k\mathbf w_j|^2+\sigma_{\mathrm{c} k}^2}.
\end{equation} 
The communications channels are modeled with Rician fading, given by $\mathbf h_k=\zeta_{\mathrm{c} k}\left(\sqrt{\frac{R_{\mathrm{F}}}{1+R_{\mathrm{F}}}}\mathbf h_k^{\text{LoS}}+\sqrt{\frac{1}{1+R_{\mathrm{F}}}}\mathbf h_k^{\text{NLoS}} \right)$, 
where $ \zeta_{\mathrm{c} k} $ denotes the large-scale path loss, $ R_{\mathrm{F}} $ refers to the Rician factor, while $ \mathbf h_k^{\text{LoS}} $ and $ \mathbf h_k^{\text{NLoS}}\sim\mathcal{CN}(\mathbf 0,\mathbf I) $ represent the small-scale line-of-sight (LoS) and non-line-of-sight (NLoS) components, respectively. The LoS component $ \mathbf h_k^{\text{LoS}} $ is expressed as $ \mathbf h_k^{\text{LoS}}=\alpha_{\mathrm{c} k} \mathbf a_{\mathrm{t}}(\phi_k) $, where $ \alpha_k $ is the complex gain, $ \phi_k $ refers to the angle-of-departure (AoD) and $ a_{\mathrm{t}}(\cdot) $ denotes the normalized transmit array steering vector at the BS.

\subsubsection{Sensing Model}
The received echo signal  at the BS is 
\begin{equation}\label{key}
	\mathbf y_{\mathrm{s}}=\sum_{m=1}^{M}\mathbf G_m\mathbf x+\sum_{j=M+1}^{M+C}\mathbf G_j\mathbf x+\mathbf n_{\mathrm{s}},
\end{equation} 
where $ \mathbf G_i=\zeta_{\mathrm{s} i}\alpha_{\mathrm{s} i}\mathbf a_{\mathrm{r}}(\varphi_i)\mathbf a_{\mathrm{t}}^\H(\varphi_i), i=\{1,\ldots,M+C\} $, and $ \mathbf n_{\mathrm{s}}\sim \mathcal{CN}(0,\sigma_{\mathrm{s}}^2\mathbf I)  $ is the noise vector at the BS. Here, $ \zeta_{\mathrm{s} i}$, $\alpha_{\mathrm{s} i}$, and $\varphi_i $ represent the path loss, complex gain, and AoD, while $ \mathbf a_{\mathrm{r}}(\cdot)$ and $\mathbf a_{\mathrm{t}}(\cdot) $ are the normalized receive and transmit array steering vectors at the BS. 

\par Let $ \mathbf F=[\mathbf f_1,\ldots,\mathbf f_M]\in\mathbb C^{L_{\mathrm{r}}\times 1} $ be the receive combining matrix.  The combined signal for target $ m $ is given by
\begin{equation}\label{key}
	 r_m=\mathbf f_m^\H\mathbf y_{\mathrm{s}}= \mathbf f_m^\H\mathbf G_m\mathbf x+\mathbf f_m^\H\sum_{j\neq m}^{M+C}\mathbf G_j\mathbf x+\mathbf f_m^\H\mathbf n_{\mathrm{s}}.
\end{equation}
The SCNR for target $ m $ can be defined as
\begin{equation}\label{key}
	\gamma_{\mathrm{s} m}=\frac{\|\mathbf f_m^\H\mathbf G_m\mathbf W\|^2 }{\sum_{j\neq m}^{M+C}\| \mathbf f_m^\H\mathbf G_j\mathbf W\|^2+L_{\mathrm{r}}\sigma_{\mathrm{s}}^2\|\mathbf f_m\|^2 }.
\end{equation}

\subsection{Problem Formulation}
Our objective is to jointly optimize the transmit precoding matrix $ \mathbf W $ and receive combining matrix $ \mathbf F $ to achieve a balanced tradeoff between communications and sensing performance. Specifically, we seek a design that guarantees fairness among communications users and sensing targets. Therefore, we employ the minimal SINR and SCNR for the communications and sensing performance metrics, respectively. The joint design problem is formulated as
 \begin{subequations}
 	\label{P1}
 	\begin{align}
 		\max_{ \mathbf W,\mathbf F}&\,\, \min_{k\in\mathcal K}\{ \gamma_{\mathrm{c} k}\}+\delta_{\mathrm{d}} \min_{m\in\mathcal M}\{ \gamma_{\mathrm{s} m} \} \\
 	\label{PAPC}	\text{s.t.}\,\,
 		&\,\, \mathrm{diag}(\mathbf W\mathbf W^\H)\preceq P_{\mathrm{t}}\mathbf 1_{L_{\mathrm{t}}} / L_{\mathrm{t}}, 	\end{align}
 \end{subequations}
where $ P_{\mathrm{t}} $ is the transmit power budget, $\mathbf{1}_{L{\mathrm{t}}}$ represents a vector of all ones with a size of $L_{\mathrm{t}} \times 1$, and $ \delta_{\mathrm{d}} $ is the weight to control the tradeoff between communications and sensing performance. The per-antenna power constraint \eqref{PAPC} accounts for the individual power limits of the power amplifiers associated with each transmit antenna. Problem \eqref{P1} presents challenges due to the presence of non-smooth point-wise minimal functions, non-convex fractional SINRs and SCNRs, as well as strong coupling among variables. Next, we propose two efficient solutions for \eqref{P1}.

\section{Proposed Optimization Frameworks}
\label{sec: problem formulation}

\subsection{Fractional Programming Approach}\label{standard FP}
In this subsection, we propose an interior-point method (IPM)-based FP algorithm to solve problem \eqref{P1}. Specifically, the multiple non-convex fractional SINRs and SCNRs are decoupled into block-wise convex terms, and then we adopt the alternative optimization (AO) framework to solve each block of variables alternately.

First, we apply the FP technique named quadratic transform proposed in \cite{Shen2018fractional} to transform problem \eqref{P1} into a more tractable equivalent form. By introducing auxiliary variables $ \beta_{\mathrm{c} k}\in\mathbb C $ and $ \bm\beta_{\mathrm{s} m}\in \mathbb C^{1\times K }$, SINRs and SCNRs can be re-expressed as
\begin{equation}
	\begin{aligned}
		\gamma_{\mathrm{c} k}=&\max_{\beta_{\mathrm{c} k}}\,\, 2\Re\{\mathbf h_k^\H\mathbf w_k\beta_{\mathrm{c} k}^\H\}-|\beta_{\mathrm{c} k}|^2\left( \sum_{j\neq k}^K|\mathbf h^\H_k\mathbf w_j|^2+\sigma_{\mathrm{c} k}^2 \right),\\
		\gamma_{\mathrm{s} m}=&\max_{\bm\beta_{\mathrm{s} m}}\,\, 2\Re\{ \mathbf f_m^\H\mathbf G_m\mathbf W \bm\beta_{\mathrm{s} m}^\H\}\notag \\
		&\hspace{-0.25cm} - \|\bm\beta_{\mathrm{s} m}\|^2\left( \sum_{j\neq m}^{M+C}\| \mathbf f_m^\H\mathbf G_j\mathbf W\|^2+L_{\mathrm{r}}\sigma_{\mathrm{s}}^2\|\mathbf f_m\|^2 \right),
	\end{aligned} 
\end{equation}
respectively. Although problem \eqref{P1} is still non-convex with these new forms of the SINR and SCNR, an AO framework can be applied to decompose it into three convex sub-problems. For each block, each of $\mathbf W,\mathbf F $ and $ \bm \beta=\{\beta_{\mathrm{c} k},\forall k\in\mathcal K, \bm\beta_{\mathrm{s} m},\forall m\in\mathcal M\} $ is optimized by fixing the others. Specifically, the sub-problem of transmit beamforming matrix $ \mathbf W $ is convex and therefore can be globally solved by the CVX solver, while obtaining the optimal $\mathbf F $ and $\bm\beta$ is straightforward by examing their first-order optimal conditions. We refer to this method as standard FP and omit the detailed operations due to the space constraint. 

\subsection{Proposed Low-Complexity Design}
\subsubsection{Problem Reformulation}
\label{sec: algorithm}
Although \eqref{P1} can be solved by the above FP framework effectively, numerical tools are called in each iteration incurring a high computational complexity and run time, hindering its practical employment. To overcome this limitation, we develop a novel first-order low-complexity algorithm. At first, we reformulate \eqref{P1} as
\begin{equation}\label{P2}
 \max_{ \mathbf W\in\mathcal S,\mathbf F}\,\, \min_{k\in\mathcal K}\{ \log(1+\gamma_{\mathrm{c} k})\}+\delta_{\mathrm{l}} \min_{m\in\mathcal M}\{ \log(1+\gamma_{\mathrm{s} m} )\},
\end{equation}
where $  \mathcal S=\{\mathbf W: \mathrm{diag}(\mathbf W\mathbf W^\H)\preceq P_t\mathbf 1_{L_t}/L_t\} $, and $ \delta_{\mathrm{l}} $ is the new weight coefficient. Note that compared to the original objective function in \eqref{P1}, we have employed  $\log(1+\gamma_{\mathrm{c} k})$ and $\log(1+\gamma_{\mathrm{s} m} )$ for the communications and sensing metrics, respectively, in \eqref{P2}. As monotonically increasing functions, the use of these logarithmic functions does not affect the fairness of communications and sensing functions. By introducing two probability variables $ \mathbf z_{\mathrm{c}}=[z_{\mathrm{c}1},\ldots,z_{\mathrm{c} K}]^\T $ and $ \mathbf z_{\mathrm{s}}=[z_{\mathrm{s}1},\ldots,z_{\mathrm{s} M}]^\T $, we can rewrite problem \eqref{P2} as
\begin{equation}
    \begin{aligned}\label{P2t1}
		\max_{ \mathbf W\in\mathcal S,\mathbf F}\,\, \min_{\mathbf z_c\in \mathcal Z_c,\mathbf z_{\mathrm{s}}\in \mathcal Z_{\mathrm{s}} } &\sum_{k=1}^{K} z_{\mathrm{c} k}r_{\mathrm{c} k}+\delta_{\mathrm{l}} \sum_{m=1}^{M} z_{\mathrm{s} m}r_{\mathrm{s} m},
\end{aligned}
\end{equation}
where $ \mathcal Z_c=\{\mathbf z_c:\mathbf z_c\succeq\mathbf 0, \mathbf 1_K^\T\mathbf z_c=1 \} $ and $ \mathcal Z_{\mathrm{s}}=\{\mathbf z_{\mathrm{s}}:\mathbf z_{\mathrm{s}}\succeq\mathbf 0, \mathbf 1_M^\T\mathbf z_{\mathrm{s}}=1 \} $ are two compact convex simplices, and $r_{\mathrm{c} k}=\log(1+\gamma_{\mathrm{c} k}), r_{\mathrm{s} m}=\log(1+\gamma_{\mathrm{s} m} )$. By eliminating the non-smooth nature of problem \eqref{P2}, we can adopt first-order optimization methods to solve problem \eqref{P2t1}. Specifically, we first apply the Lagrangian and quadratic transformation to decouple the non-convex objective function. By introducing auxiliary variables $ \bm\xi_{\mathrm{c}}=[\xi_{\mathrm{c}1},\ldots,\xi_{\mathrm{c} K}]^\T\in\mathbb{R}^{K\times 1},\bm\xi_{\mathrm{s}}=[\xi_{\mathrm{s}1},\ldots,\xi_{\mathrm{s} M}]^\T\in\mathbb{R}^{M\times 1} ,\bm\theta_c=[\theta_{\mathrm{c}1},\ldots,\theta_{\mathrm{c} K} ]^\T\in\mathbb{C}^{K\times 1},\bm\Theta_{\mathrm{s}}=[\bm\theta_{\mathrm{s}1}^\H,\ldots,\bm\theta_{\mathrm{s} M}^\H ]^\H\in\mathbb{R}^{M\times K} $, problem \eqref{P2t1} can be equivalently reformulated as
\begin{equation}\label{P2t2}
	\begin{aligned}
		\max_{ \mathbf W\in\mathcal S,\mathbf F,\bm\xi,\bm\Theta}\,\, \min_{\mathbf z_c\in\mathcal Z_c,\mathbf z_{\mathrm{s}} \in\mathcal Z_{\mathrm{s}}} \sum_{k=1}^{K} z_{\mathrm{c} k}f_{\mathrm{c} k}+\delta_{\mathrm{l}} \sum_{m=1}^{M} z_{\mathrm{s} m} f_{\mathrm{s} m},
	\end{aligned}
\end{equation}
where $ \bm\xi=\{ \bm\xi_c,\bm\xi_{\mathrm{s}}\}, \bm\Theta=\{\bm\theta_c,\bm\Theta_{\mathrm{s}}\}$, and
\begin{align*}
		f_{\mathrm{c} k}&=\log(1+\xi_{\mathrm{c} k})+2\sqrt{1+\xi_{\mathrm{c} k} }\Re\{\mathbf h_k^\H\mathbf w_k\theta_{\mathrm{c} k}^\H \}\\
		&-|\theta_{\mathrm{c} k}|^2\left(\sum_{j=1}^{K}|\mathbf h_k^\H\mathbf w_j|^2+\sigma_{\mathrm{c} k}^2  \right)-\xi_{\mathrm{c} k},\\
		f_{\mathrm{s} m}&=\log(1+\xi_{\mathrm{s} m})+2\sqrt{1+\xi_{\mathrm{s} m}}\Re\{\mathbf f_m^\H\mathbf G_m\mathbf W\bm\theta_{\mathrm{s} m}^\H \}\\
		&-\|\bm\theta_{\mathrm{s} m}\|^2\left(\sum_{j=1}^{M+C}\| \mathbf f_m^\H\mathbf G_j\mathbf W\|^2+L_{\mathrm{r}}\sigma_{\mathrm{s}}^2\|\mathbf f_m\|^2  \right)-\xi_{\mathrm{s} m}.
\end{align*}
Although problem \eqref{P2t2} is still non-convex when all variables are jointly optimized, it is convex over each variable when fixing the others. Therefore, we employ the AO method to solve problem \eqref{P2t2}. Specifically, in this method, each variable block is solved by fixing the others, as elaborated next.

\subsubsection{AO Method for Solving \eqref{P2t2}}

\paragraph{Update $ \mathbf z_c, \mathbf z_{\mathrm{s}} $}
Note that for a given primal variable $ \{\mathbf W,\mathbf F,\bm\xi,\bm\Theta\} $, the optimal solution to the inner minimization occurs when $ z_{\mathrm{c},i}=1, i\in\mathcal K $ and $ z_{\mathrm{s},j}=1, j\in\mathcal M $, indicating that user $ i $ and target $ j $ achieve the worst-case SINR and SCNR, respectively. However, such vertex point at each iteration would lead to oscillation when it lies the neighborhood of any locally optimal point of problem \eqref{P2}. Inspired by the well-known log-sum-exp smoothing function \cite{xu2001smoothing}, we propose to employ the following closed-form approximation for $ \mathbf z_c $ and $ \mathbf z_{\mathrm{s}} $ at each iteration:
\begin{subequations}\label{Update z}
 	\begin{align}
 		z_{\mathrm{c} k}&=\frac{\exp(-\mu f_{\mathrm{c} k} )}{\sum_{j=1}^{K}\exp(-\mu f_{\mathrm{c}j}) }, \forall k\in\mathcal K,\\
 		z_{\mathrm{s} m}&=\frac{\exp(-\mu f_{\mathrm{s} m} )}{\sum_{j=1}^{M}\exp(-\mu f_{\mathrm{s}j}) }, \forall m\in\mathcal M,
 	\end{align}
\end{subequations} 
where $ \mu $ is a smooth parameter. Note that the approximation becomes tight as $\mu\rightarrow \infty$.

\paragraph{Update $ \bm\xi,\bm\Theta,\mathbf F $}
Given other variables fixed, the subproblems for $ \bm\xi, \bm\Theta $ and $ \mathbf F $ are unconstrained convex problems, respectively. By examining their first-order optimal conditions, the optimal solutions for $ \bm\xi , \bm\Theta $ and $ \mathbf F $ are given by
\begin{subequations}\label{Update auxi}
	\begin{align}
		&\xi_{\mathrm{c} k}=\gamma_{\mathrm{c} k},\quad\quad\xi_{\mathrm{s} m}=\gamma_{\mathrm{s} m},\\
		&\theta_{\mathrm{c} k}=\frac{\sqrt{1+\xi_{\mathrm{c} k}}\mathbf h_k^\H\mathbf w_k}{\sum_{j=1}^{K}|\mathbf h_k^\H\mathbf w_j|^2+\sigma_{\mathrm{c} k}^2 },\\
		&\bm\theta_{\mathrm{s} m}=\frac{\sqrt{1+\xi_{\mathrm{s} m} }\mathbf f_m^\H\mathbf G_m\mathbf W}{\sum_{j=1}^{M+C}\| \mathbf f_m^\H\mathbf G_j\mathbf W\|^2+L_{\mathrm{r}}\sigma_{\mathrm{s}}^2\|\mathbf f_m\|^2 },\\
		&\mathbf f_m=\! \frac{\sqrt{1+\xi_{\mathrm{s} m}}}{\|\bm\theta_{\mathrm{s} m}\|^2}\!\left(\sum_{j=1}^{M+C}\!\!\mathbf G_j\mathbf W\mathbf W^\H\mathbf G_j^\H +L_{\mathrm{r}}\sigma_{\mathrm{s}}^2\mathbf I \right)^{-1}\!\!\!\!\!\!\!\! \mathbf G_m\mathbf W\bm\theta_{\mathrm{s} m}^\H. 
	\end{align}
\end{subequations}
\paragraph{Update $ \mathbf W $}
\par Given other variables fixed, the subproblem with respect to $ \mathbf W $ is formulated as
\begin{align}\label{SubW}
		\max_{ \mathbf W\in\mathcal S}\,\,\,\,&2\Re\{\mathrm{tr}(\mathbf W (\mathbf X+\bm\Sigma_1^\H\mathbf H^\H) ) \}-\mathrm{tr}(\mathbf W\mathbf W^\H(\mathbf Y+\mathbf H\bm\Sigma_2\mathbf H^\H) )
\end{align}
where
\begin{equation*}\label{key}
	\begin{aligned}
		&\mathbf\Sigma_1=\mathrm{diag}\{z_{\mathrm{c}1}\sqrt{1+\xi_{\mathrm{c}1}}\theta_{\mathrm{c}1},\ldots,z_{\mathrm{c} k}\sqrt{1+\xi_{\mathrm{c} k}}\theta_{\mathrm{c} k} \},\\	
		&\mathbf H=[\mathbf h_1,\ldots,\mathbf h_K ], \mathbf \Sigma_2=\mathrm{diag}\{z_{\mathrm{c}1}|\theta_{\mathrm{c}1}|^2,\ldots,z_{\mathrm{c} k}|\theta_{\mathrm{c} k}|^2 \}\\
&\mathbf 	X=\delta_{\mathrm{l}}\sum_{m=1}^Mz_{\mathrm{s} m}\sqrt{1+\xi_{\mathrm{s} m}}\bm \theta_{\mathrm{s} m}^\H\mathbf f_m^\H\mathbf G_m,\\
&\mathbf Y=\delta_{\mathrm{l}}\sum_{m=1}^M\sum_{j=1}^{M+C}z_{\mathrm{s} m}\|\bm\theta_{\mathrm{s} m}\|^2\mathbf G_j\mathbf f_m\mathbf f_m^\H\mathbf G_j. 
	\end{aligned}
\end{equation*}
To solve this problem, we first introduce the following lemma.
\begin{lemma}\label{SCA}
	For any given positive semi-definite Hermitian matrix $ \mathbf A \in\mathbb{C}^{L_{\mathrm{t}}\times L_{\mathrm{t}}}$, we have
	\begin{equation}\label{key}
		\mathrm{tr}(\mathbf W\mathbf W^\H\mathbf A)\geq 2\Re\{\mathbf P\mathbf W^\H\mathbf A \}-\mathrm{tr}(\mathbf P\mathbf P^\H\mathbf A ),
	\end{equation}
where $ \mathbf P\in\mathbb C^{L_{\mathrm{t}}\times K} $ is an auxiliary matrix, and the equality is achieved if and only if $ \mathbf W=\mathbf P $.
\end{lemma}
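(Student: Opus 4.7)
The plan is to recognize the claimed inequality as a standard ``complete the square'' type minorization for a quadratic form in $\mathbf{W}$, and to derive it by forcing the difference of the two sides to be a trace of the product of two positive semi-definite Hermitian matrices, which is nonnegative.

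Concretely, I would first note that $(\mathbf{W}-\mathbf{P})(\mathbf{W}-\mathbf{P})^\H$ is Hermitian PSD by construction, so with $\mathbf{A}=\mathbf{A}^{1/2}\mathbf{A}^{1/2}$ (using any Hermitian square root, or Cholesky of $\mathbf{A}$) we have
\begin{equation*}
\mathrm{tr}\bigl((\mathbf{W}-\mathbf{P})(\mathbf{W}-\mathbf{P})^\H \mathbf{A}\bigr)=\bigl\|\mathbf{A}^{1/2}(\mathbf{W}-\mathbf{P})\bigr\|_F^2\;\geq\;0.
\end{equation*}
Next I would expand the left-hand side into four traces and use cyclic invariance together with the Hermiticity of $\mathbf{A}$ to pair the cross terms: since $\overline{\mathrm{tr}(\mathbf{W}\mathbf{P}^\H \mathbf{A})}=\mathrm{tr}((\mathbf{W}\mathbf{P}^\H \mathbf{A})^\H)=\mathrm{tr}(\mathbf{A}\mathbf{P}\mathbf{W}^\H)=\mathrm{tr}(\mathbf{P}\mathbf{W}^\H \mathbf{A})$, the two cross terms are complex conjugates and add up to $2\Re\{\mathrm{tr}(\mathbf{P}\mathbf{W}^\H \mathbf{A})\}$. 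Rearranging the resulting nonnegativity gives exactly the stated inequality.

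For the tightness claim I would just substitute $\mathbf{W}=\mathbf{P}$: both sides collapse to $\mathrm{tr}(\mathbf{P}\mathbf{P}^\H\mathbf{A})$, so equality is attained (a full converse would need $\mathbf{A}$ positive definite, but only the ``if'' direction is needed to use the lemma as a surrogate in the subsequent SCA/MM update for $\mathbf{W}$). There is no real obstacle here; the only subtlety is being careful with the complex-conjugate identification of the cross terms, which is why the Hermiticity assumption on $\mathbf{A}$ is essential.
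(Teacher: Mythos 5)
Your proposal is correct and follows essentially the same route as the paper, which simply invokes the nonnegativity of $\mathrm{tr}\bigl((\mathbf{W}-\mathbf{P})(\mathbf{W}-\mathbf{P})^\H\mathbf{A}\bigr)$ and leaves the expansion of the cross terms to the reader; you merely fill in those details. Your side remark that the ``only if'' part of the equality claim actually requires $\mathbf{A}$ to be positive definite (a merely PSD $\mathbf{A}$, e.g.\ $\mathbf{A}=\mathbf{0}$, admits equality with $\mathbf{W}\neq\mathbf{P}$) is a valid observation about a small overstatement in the lemma, and you correctly note that only the ``if'' direction is needed for the MM update.
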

\begin{proof}
	The result in Lemma \ref{SCA} can be easily verified by the inequality $ \mathrm{tr}((\mathbf W-\mathbf P)(\mathbf W-\mathbf P)^\H\mathbf A  )\geq 0 $.
\end{proof}
Based on Lemma \ref{SCA}, we obtain the following linear approximation of problem \eqref{SubW} for a given $ \mathbf P $:
\begin{equation}\label{linear approximation}
	\max_{\mathbf W\in\mathcal S} \Re\{\mathrm{tr}(\mathbf W(\bm\Sigma_1^\H\mathbf H^\H+\mathbf X )+\mathbf P\mathbf W^\H( \lambda\mathbf I-\mathbf H\bm\Sigma_2\mathbf H^\H-\mathbf Y ) \} ,
\end{equation}
where $ \lambda $ is an constant such that $ \lambda\mathbf I-\mathbf H\bm\Sigma_2\mathbf H^\H-\mathbf Y  $ is a positive semi-definite matrix. It is suggested to set it as the dominant eigenvalue of $ \mathbf H\bm\Sigma_2\mathbf H^\H + \mathbf Y  $. The optimal solution of problem \eqref{linear approximation} is then given by
\begin{equation}\label{Update W}
	\mathbf W=\bm\Pi_{\mathbf S}\left(\mathbf X^\H+\mathbf H\mathbf \Sigma_1+(\lambda\mathbf I-\mathbf H\bm\Sigma_2\mathbf H^\H-\mathbf Y ) \mathbf P \right),
\end{equation}
where $ \bm\Pi_{\mathcal S}(\cdot) $ denotes the projection of point $ \mathbf S$ onto set $ \mathcal S $, i.e.,
\begin{equation}\label{eq_projection}
	\bm\Pi_{\mathcal S}(\mathbf S)\triangleq \sqrt{P_{\mathrm{t}}/L}\left(\mathbf I\odot(\mathbf S\mathbf S^\H)\right)^{-1}\mathbf S,
\end{equation}
with $ \odot $ denoting the Hadamard product.

\par For clarity, we summarized the proposed low-complexity solution in Algorithm \ref{al1}. It starts with non-zero feasible matrices $\mathbf W^{[0]}$ and $\mathbf F^{[0]}$. Through iterative steps, we compute the probability variables $\mathbf z_c$ and $ \mathbf z_s$, auxiliary variables $\bm\xi$ and $\bm\Theta$, as well as the receive and transmit beamforming matrices $\mathbf F $ and $\mathbf W$, until the objective value in \eqref{P2t2} converges.

\addtolength{\topmargin}{0.01in}
\setlength{\textfloatsep}{7pt}	
\begin{algorithm}[t!]
    \small
	\textbf{Initialize}: $n\leftarrow0$, $\mathbf{W}^{[n]}$, $\mathbf{F}^{[n]}$\;
	\Repeat{The objective value in \eqref{P2t2} converges.}{
		$n\leftarrow n+1$\;
		Update $ \mathbf z_c^{[n]}$ and $ \mathbf z_{\mathrm{s}}^{[n]} $ by \eqref{Update z} \;
		Update $ \bm\xi^{[n]},\bm\Theta^{[n]}$ and  $\mathbf F^{[n]} $ by \eqref{Update auxi}\; 
		\For{$t=1:I_2$}{$ \mathbf P=\mathbf W^{[n]} $\;
	 Update $ \mathbf W^{[n]} $ by \eqref{Update W}\;}
		
	}	
	\caption{Proposed Low-Complexity Algorithm}
	\label{al1}				
\end{algorithm}

\subsection{Convergence and Complexity Analysis}
\label{sec: complexity}
For the standard FP method to solve problem \eqref{P1} specified in Sec. \ref{standard FP}, each block is optimally solved, generating a non-decreasing sequence bounded by the power constraint. Therefore, the standard FP method is guaranteed to converge with a given tolerance. However, the iterative calls for an IPM in the CVX solver incur high computational complexity, expressed as $\mathcal O(I_1(ML_{\mathrm{r}}^3+[KL_{\mathrm{t}}]^{3.5}) )  $, where $ I_1 $ denotes the number of iterations, $ ML_{\mathrm{r}}^3$ and $ [KL_{\mathrm{t}}]^{3.5} $ are the complexities for updating $ \mathbf F $ and $ \mathbf W $, respectively.  

For the proposed Algorithm \ref{al1} to solve problem \eqref{P2}, it is currently challenging to determine the specific range of the smooth parameter $ \mu $ that ensures convergence. Although we lack a rigorous theoretical analysis, all of our numerical experiments have shown stable convergence. Similarly to the standard FP method, the computational complexity of Algorithm \ref{al1} is primarily dominated by the updating of receive and transmit beamforming matrix, given as $ \mathcal O( ML_{\mathrm{r}}^3 ) $ and $ \mathcal O(I_2KL_{\mathrm{t}}^2 ) $, where $ I_2 $ denotes the number of inner iterations. Consequently, the overall computational complexity of Algorithm \ref{al1} is $ \mathcal O( I_1(ML_{\mathrm{r}}^3 +I_2KL_{\mathrm{t}}^2 ) ) $, which is significantly lower than the standard FP method.

\section{Numerical Results}
\label{sec: numerical results}
\par In this section, we evaluate the performance of the proposed FP method and Algorithm \ref{al1}. We employ the CVX toolbox \cite{grant2008cvx} to solve \eqref{P1} with the interior-point method. We model the path loss as $ \zeta_{\mathrm{c} k}=\zeta_0 d_{\mathrm{c} k}^{\iota_{\mathrm{c}}} $ and $ \zeta_{\mathrm{s} m}=\zeta_0 d_{\mathrm{s} m}^{\iota_{\mathrm{s}}} $, where $ \zeta_0=-30 $~dB represents the reference path loss at the reference distance of $ d=1 $~m, $ d_{\mathrm{c} k} $ and $ d_{\mathrm{s} m} $ are the distances from the BS to communications user $ k $ and sensing target $ m $, respectively, and $ \iota_c=3$ and $ \iota_{\mathrm{s}}=2 $ are path loss exponents. We set $d_{ck}=100+20\eta_{ck}$ and $d_{sm}=10+2\eta_{sm}$, where $\eta_{ck},\eta_{sm}\sim\mathcal{N}(0,1)$. The Rician factor for communications channels is $ R_{\mathrm{F}}=3 $dB and the complex gains are given by $ \alpha_{\mathrm{c} k}\sim\mathcal{CN}(0,1)$ and $\alpha_{\mathrm{s} i}\sim\mathcal{CN}(0,1) $, respectively. The AoDs $ \phi_k $ and $ \varphi_i $ follows the uniform distribution $ \mathcal U(-2\pi/3,2\pi/3) $ \cite{wang2024joint}. We consider a basic setup where the BS is equipped with $ L_{\mathrm{t}}=16 $ transmit antennas and $ L_{\mathrm{r}}=16 $ receive antennas, serving $ K=4 $ communications users while detecting $ M=2 $ targets among $ C=2 $ sources of clutter. The noise power at each user and the radar receiver is set to $ \sigma_{\mathrm{c} k}^2=-120 $~dBm and $ \sigma_{\mathrm{s}}^2=-120 $~dBm, and the transmit power budget is set so that the SNR becomes $P_t/ \sigma_{\mathrm{c} k}^2= 20 $~dB. 
All the results are averaged by 100 channel realizations.

\begin{figure*}
    \begin{center}
        \begin{minipage}{0.32\textwidth}
            \hspace{-0.4cm}\includegraphics[width=1.1\linewidth]{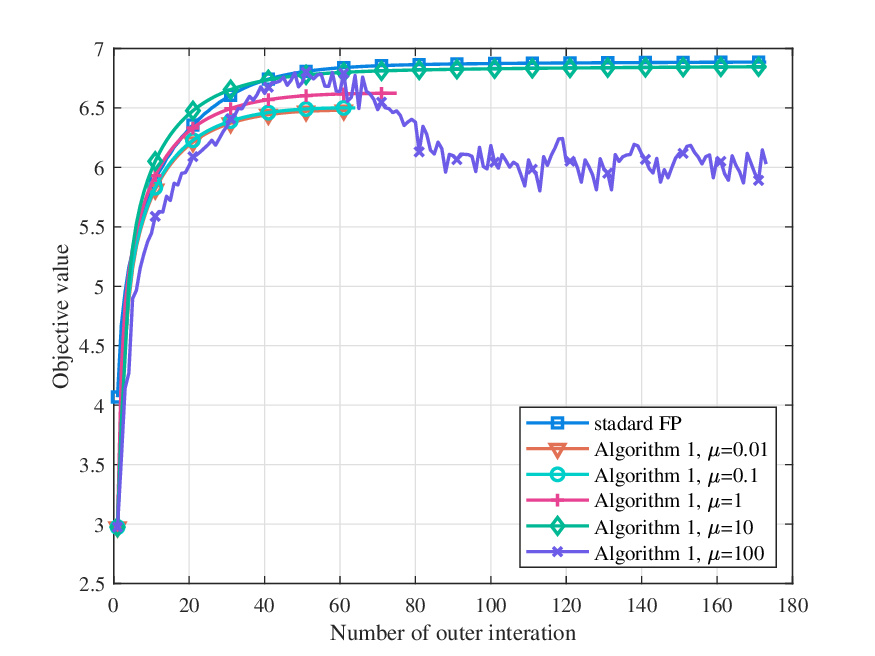}
            \caption{Convergence of the proposed FP scheme and Algorithm \ref{al1} with various $ \mu $.}
            \label{fig: convergence}
        \end{minipage}
        \hspace{0.005\textwidth} 
        \begin{minipage}{0.32\textwidth}
            \hspace{-0.4cm}\includegraphics[width=1.1\linewidth]{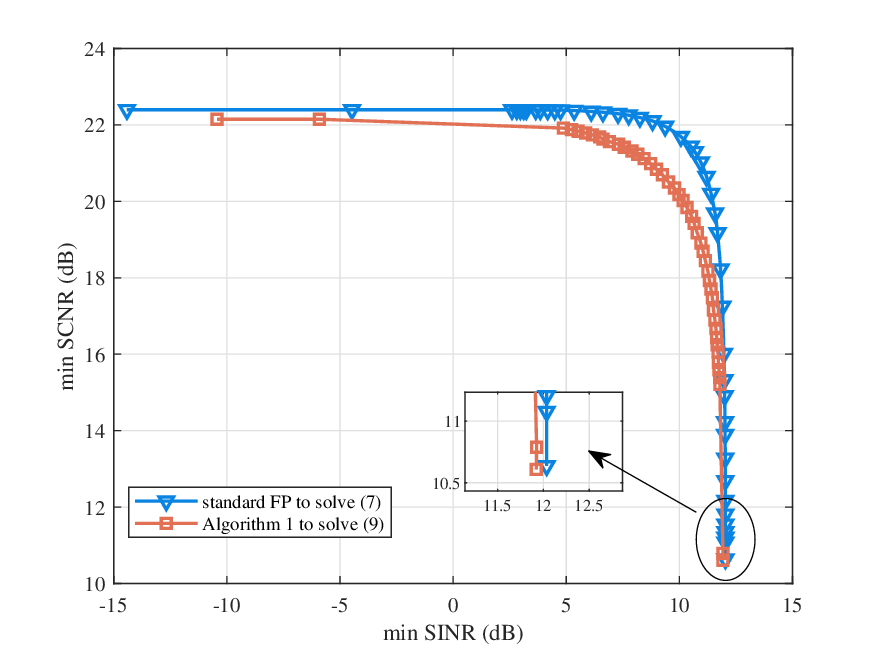}
            \caption{Tradeoff between the minimum SINR and SCNR of the  FP scheme and Algorithm \ref{al1}. }
            \label{fig: region}
        \end{minipage}
        \hspace{0.005\textwidth} 
        \begin{minipage}{0.32\textwidth}
            \hspace{-0.4cm}\includegraphics[width=1.1\linewidth]{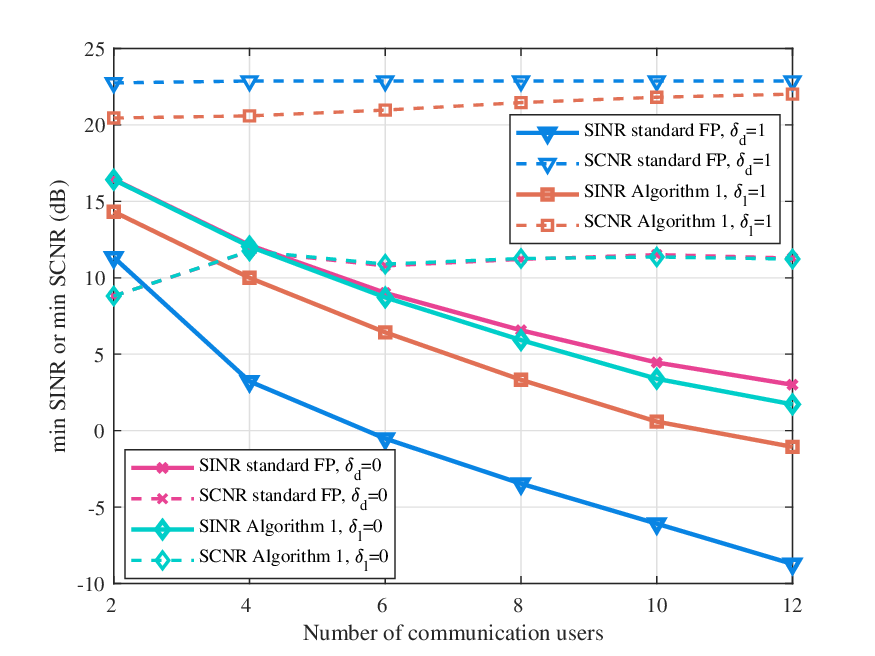}
            \caption{Performance comparison of the proposed FP scheme and Algorithm \ref{al1} for $K\in \{2,4,\ldots,12\}$.}
            \label{fig:2 region}
        \end{minipage}
    \end{center}
    \vspace{-0.5cm} 
\end{figure*}

  \begin{table}[t!]
		\begin{center}
			\caption{Run time (seconds) of the FP method and Algorithm \ref{al1} for $K \in \{2,4,\ldots,12 \}$ and $\delta_d=\delta_l \in \{0, 1\}$.}
			\label{tab:user}
			\resizebox{\linewidth}{!}{\begin{tabular}{ccccccc}
				\toprule
				Schemes & $K = 2$&$K = 4$&$K = 6$&$K = 8$&$K = 10$&$K = 12$
      \\
				\hline\hline
				standard FP, $\delta_d=0$ & 8.78&13.98&21.06&31.60&52.26&57.47 \\
    \hline
    			standard FP, $\delta_d=1$ & 157.26&405.19&395.30&403.51&333.49&284.45 \\
				\hline
				Algorithm \ref{al1}, $\delta_l=0$ &0.39& 0.56&0.40&0.26&0.14&0.13\\
    \hline
				Algorithm \ref{al1}, $\delta_l=1$ &1.14& 1.28&1.46&1.52&1.57&1.56\\
				\bottomrule\vspace{-0.75cm}
			\end{tabular}}
		\end{center}
	\end{table}
 
 \begin{table}[t!]
		\begin{center}
			\caption{Communications SINRs and sensing SCNRs (dB) achieved by the FP method and Algorithm \ref{al1}.}
			\label{tab:individual}
			\resizebox{\linewidth}{!}{\begin{tabular}{ccccccc}
				\toprule
				Schemes & user 1&user 2&user 3&user 4&target 1&target 2
      \\
				\hline\hline
				standard FP & 10.74&10.74&10.74&10.74&18.76&18.76 \\
				\hline
				Algorithm \ref{al1} &10.66& 10.71&11.15&10.53&21.57&18.31\\
				\bottomrule\vspace{-0.75cm}
			\end{tabular}}
		\end{center}
	\end{table}

Fig.\ \ref{fig: convergence} illustrates the convergence behavior of both the standard FP scheme and Algorithm \ref{al1} with $\delta_l=1$. Algorithm \ref{al1} showcases a tradeoff between stable convergence and performance, contingent upon the smooth parameter $ \mu $. Conversely, the standard FP method exhibits consistent convergence and superior performance, aligning with the prior analysis. However, the computational complexity of standard FP is much higher than that of Algorithm \ref{al1}, as analyzed in Section \ref{sec: complexity}. Furthermore, in the simulations of the FP method, we also observed an excessively long run time due to the employment of CVX, as shown in Table \ref{tab:user}. For example, with $K=4$, Algorithm \ref{al1} performs $25$ and $316$ times faster than the FP counterpart with $\delta_d = \delta_l= \{0, 1\}$, respectively.

Table \ref{tab:individual} presents the SINRs and SCNRs for all communications users and sensing targets obtained by the FP scheme and Algorithm \ref{al1} with $\delta_l=1$. Both schemes ensure good fairness among the users and targets. While Algorithm \ref{al1} exhibits a slight loss in minimum SINR compared to the FP scheme, its total SINR and SCNR are larger.

In Fig.\ \ref{fig: region}, we show the communications--sensing performance tradeoff with $\delta_d, \delta_l \in [0, 10^{6} ]$. With $\delta_d = \delta_l = 0$, the objective functions contain only the minimum communications SINR. In contrast, with $\delta_d, \delta_l \rightarrow \infty$, the objective functions are dominated by the minimum sensing SCNR. As expected, the standard FP method presents a better communications--sensing performance tradeoff because the transmit beamforming sub-problem can be globally solved by the CVX solver. However, its practicality in large-scale networks is hindered by the high computational and time complexities stemming from repeated utilization of the optimization toolbox, as justified in Section \ref{sec: complexity} and Table \ref{tab:user}.


Fig.\ \ref{fig:2 region} shows the communications and sensing performances versus the number of users $K$ with $\delta_d=\delta_l = \delta \in \{0, 1\}$. In both cases,  when $K$ increases, the minimum SINR decreases due to the more significant inter-user interference. In contrast, the minimum SCNR slightly increases because of the better chance that a sensing target is covered by more beams.  Furthermore, $\delta=0$ leads to better communications performance but degraded sensing performance compared to $\delta=1$. Indeed, with $\delta=0$, the objective functions in both \eqref{PAPC} and \eqref{P2} involve only the minimum SINR. With $\delta=1$, the FP method and Algorithm \ref{al1} achieve comparable sensing performance, while the latter yields significant improvements in communications performance due to the reformulated objective functions.



\section{Conclusion}
\label{sec: conclusion}
\par We considered a monostatic ISAC system with multiple communications users, sensing targets, and clutter objects. Aiming at enhancing the fairness among communications users and balancing the performance with sensing functionality, we formulated the beamforming design problem to maximize the weighted sum of minimum communications SINRs and sensing SCNRs under the per-antenna power constraint. We then proposed two efficient methods, namely, the standard FP and low-complexity first-order algorithms. Both achieve a good tradeoff between communications and sensing performances. However, while the former exhibits a good convergence profile owing to leveraging CVX, it has high complexity and run time. In contrast, the first-order method leads to closed-form solutions for the transmit and receive beamformers, and thus, is more computational and time efficient than the standard FP counterpart. Simulation results show that both methods achieve a good tradeoff between communications and sensing performance. However, the latter approach exhibits significantly lower computational complexity and simulation time. In future work, we will elaborate on the actual communications rate and sensing accuracy performances in more detail.
 
\section*{Acknowledgement}
This research was supported by the Research Council of Finland through 6G Flagship (grant number: 346208) and project DIRECTION (grant number: 354901) and by CHIST-ERA via project PASSIONATE (grant number: 359817).

\bibliographystyle{IEEEtran}
\bibliography{IEEEabrv,reference}
\end{document}